
\documentclass[acmtoms]{acmtrans2m}

\acmVolume{xx}
\acmNumber{x}
\acmYear{xx}
\acmMonth{xx}

\usepackage{amsmath,amssymb}
\usepackage{graphicx}
\usepackage{xspace}
\usepackage{listings}
\usepackage{longtable}
\usepackage{enumerate}
\usepackage{verbatim}  
\newcommand{\bmat}[1]{{\renewcommand{\arraystretch}{1.1}%
   \begin{bmatrix}#1\end{bmatrix}}}

\ifx   \innerprod\undefined%
   \def\innerprod(#1,#2){\langle#1,#2\rangle} 
\fi

\newcommand{\Newcommand}[2]%
   {\ifx#1\undefined \newcommand{#1}{#2} \else \renewcommand{#1}{#2} \fi}
\ifx\mod\undefined
  \newcommand{\mod}[1]{|#1|}
\else
  \renewcommand{\mod}[1]{|#1|}
\fi
\Newcommand{\Re} {\mathbb{R}}           


\providecommand{\rank} {\mathop{\mathrm{rank}}}


\newcommand{\be}{\begin{enumerate}}
\newcommand{\ee}{\end{enumerate}}

\newcommand{\T}{^T\!}



\newcommand{\Cpp}{C\raise3pt\hbox{\tiny++}}

\newcommand{\cond}{\mathrm{cond}}

\newcommand{\Null}{\mathop{\mathrm{null}}}

\newcommand{\range}{\mathop{\mathrm{range}}}

\newcommand{\drop}{^{\null}}


\newcommand{\normm}[1]{\biggl\|#1\biggr\|}
\newcommand{\norm}[1]{\|#1\|}

\newcommand{\spose}[1]{\hbox to 0pt{#1\hss}}


\providecommand{\text}[1]{\hbox{\quad#1\quad}}



\newcommand{\nthinsp}{\mskip -2   mu}










\Newcommand{\R}{_{\scriptscriptstyle R}}


\newcommand{\superstar}{^{\raise 0.5pt\hbox{$\nthinsp *$}}}
\newcommand{\SUPERSTAR}{^{\raise 0.5pt\hbox{$*$}}}

\newcommand{\lamstarT }{\lambda^{\raise 0.5pt\hbox{$\nthinsp *$}T}}








\newcommand{\Abar}{\skew7\bar A}

\newcommand{\rbar}{\skew3\bar r}

\newcommand{\xbar}{\skew{2.8}\bar x}

\newcommand{\CG}       {{\small CG}\xspace} 
\newcommand{\CGLS}     {{\small CGLS}\xspace}

\newcommand{\LSQR}     {{\small LSQR}\xspace}
\newcommand{\GMRES}    {{\small GMRES}\xspace}
\newcommand{\LSMR}     {{\small LSMR}\xspace}
\newcommand{\MINRES}   {{\small MINRES}\xspace}
\newcommand{\MINRESQLP}{{\small MINRES-QLP}\xspace}
\newcommand{\SYMMLQ}   {{\small SYMMLQ}\xspace}
\newcommand{\SQMR}     {{\small SQMR}\xspace}
\newcommand{\MATLAB}   {{\small MATLAB}\xspace}
\newcommand{\FORTRAN}  {{\small FORTRAN}\xspace}
\newcommand{\SymOrtho} {\text{SymOrtho}\xspace}


\newtheorem{theorem}{Theorem}[section]

\newtheorem{lemma}[theorem]{Lemma}
\newdef{definition}[theorem]{Definition}
\newdef{remark}[theorem]{Remark}

\newcommand{\gamamin}{\underline{\gamma}}
\newcommand{\gamamax}{\overline{\gamma}}
\newcommand{\myhalf}{\frac12}

\newcommand{\underTk}{\underline{T_k}}
\newcommand{\underTkp}{\underline{T_{k+1}}}

\newcommand{\BibTeX}{{\rm B\kern-.05em{\sc i\kern-.025em b}
\kern-.08emT\kern-.1667em\lower.7ex\hbox{E}\kern-.125emX}}

\usepackage[vlined,ruled,nofillcomment,linesnumbered]{algorithm2e}
\SetAlgoSkip{}\SetAlgoInsideSkip{smallskip}

\newenvironment{algo}[1]
{\begin{algorithm}[#1]%
    \small%
    \DontPrintSemicolon%
    \SetArgSty{texttsf}%
    \SetTitleSty{textsf}{}%
    \SetNlSty{textrm}{}{}%
    \SetKwInput{Inputs}{input}%
    \SetKwInput{Outputs}{output}%
    \SetKwData{Converged}{converged}%
    \SetKwComment{tcc}{[}{]}
}
{\end{algorithm}}

{\ttfamily \begin{longtable}{#1}}%
{\end{longtable}}

\title{ALGORITHM xxx:
       MINRES-QLP for Singular Symmetric and Hermitian Linear
       Equations and Least-Squares Problems}
\author{SOU-CHENG T. CHOI
     \\ University of Chicago/Argonne National Laboratory
   \and MICHAEL A. SAUNDERS
     \\ Stanford University}

\markboth{S.-C. Choi and M. A. Saunders}
         {ALGORITHM xxx: MINRES-QLP}

\begin{abstract}
  We describe algorithm MINRES-QLP and its FORTRAN~90 implementation
  for solving symmetric or Hermitian linear systems or least-squares
  problems. If the system is singular, MINRES-QLP computes the unique
  minimum-length solution (also known as the pseudoinverse solution),
  which generally eludes MINRES. In all cases, it overcomes a
  potential instability in the original MINRES algorithm.
  A positive-definite preconditioner may be supplied.  Our FORTRAN~90
  implementation illustrates a design pattern that allows users to
  make problem data known to the solver but hidden and secure from
  other program units.  In particular, we circumvent the need for
  reverse communication.
  While we focus here on a FORTRAN~90 implementation, we also provide
  and maintain MATLAB versions of MINRES and MINRES-QLP.
\end{abstract}

\category{G.1.3}{Numerical Analysis}{Numerical Linear Algebra}
                [linear systems (direct and iterative methods)]

\category{G.3}{Mathematics of Computing}{Probability and Statistics}
              [statistical computing; statistical software]

\category{G.m}{Mathematics of Computing}{Miscellaneous}
              [FORTRAN program units]

\terms{Algorithms}

\keywords{Krylov subspace method, Lanczos process, conjugate-gradient
  method, singular least-squares, linear equations, minimum-residual
  method, pseudoinverse solution, ill-posed problem, regression,
  sparse matrix, data encapsulation
\newline\newline  Version 18 of \today}

\begin{document}

\lstdefinestyle{numbers}
{numbers=left, stepnumber=1, numberblanklines=false,
numberstyle=\tiny, numbersep=8pt}
\lstdefinestyle{nonumbers}
{numbers=none}

\lstset{
language=[90]Fortran, 
basicstyle=\small,    
style=numbers,        
emptylines=*1,        
breaklines=true,      
escapeinside=<>,      
framesep=2.5mm,       
}

\begin{bottomstuff}
This work was supported in part by the Office of Advanced Scientific
Computing Research, Office of Science, U.S. Dept.~of Energy, under
Contract DE-AC02-06CH11357; National Science Foundation grant
CCR-0306662; Office of Naval Research grants N00014-02-1-0076 and
N00014-08-1-0191; and U.S. Army Research Laboratory through the Army
High Performance Computing Research Center.

Authors' addresses:
   S.-C. T. Choi, Computation Institute, University of Chicago,
   Chicago, IL 60637; email: sctchoi@uchicago.edu;
   M.~A. Saunders, Department of Management Science and Engineering,
   Stanford University, Stanford, CA 94305-4121; email:
   saunders@stanford.edu.
\end{bottomstuff}

\maketitle

\clearpage

\section{INTRODUCTION}

\MINRESQLP \cite{C06,CPS11} is a Krylov subspace method for computing
the minimum-length and minimum-residual solution (also known as the
pseudoinverse solution) $x$ to the following linear systems or
least-squares (LS) problems:
\begin{align}
     \text{solve }    & Ax = b,
     \label{eqn1a}
  \\ \text{minimize } & \norm{x}_2 \quad \text{s.t.} \quad
                        Ax = b,
     \label{eqn1b}
  \\ \text{minimize } & \norm{x}_2  \quad
                        \text{s.t.} \quad
                        x \in \arg\min_x \norm{Ax-b}_2,
      \label{eqn1c}
\end{align}
where $A$ is an $n \times n$ symmetric or Hermitian matrix and $b$ is
a real or complex $n$-vector.  Problems (\ref{eqn1a}) and
(\ref{eqn1b}) are treated as special cases of (\ref{eqn1c}).  The
matrix $A$ is usually large and sparse, and it may be singular.\footnote{A further input parameter $\sigma$ (a real shift
  parameter) causes \MINRESQLP to treat ``$A$'' as if it were
  $A-\sigma I$.  For example, ``singular $A$'' really means that
  $A-\sigma I$ is singular.}  It is defined by means of a user-written
subroutine \texttt{Aprod},
whose function is to compute the product $y=Av$ for any given vector
$v$.


Let $x_k$ be the solution estimate associated with \MINRESQLP's $k$th
iteration, with residual vector $r_k = b - Ax_k$.  Without loss of
generality, we define $x_0 = 0$.  \MINRESQLP provides recurrent
estimates of $\norm{x_k}$, $\norm{r_k}$, $\norm{Ar_k}$, $\norm{A}$,
$\cond(A)$, and $\norm{Ax_k}$, which are used in the stopping
conditions.

Other iterative methods specialized for symmetric systems $Ax=b$ are
the conjugate-gradient method (\CG) \cite{HS52}, \SYMMLQ and \MINRES
\cite{PS75}, and \SQMR \cite{FN94}.  Each method requires one product
$Av_k$ at each iteration for some vector $v_k$.  \CG is intended for
positive-definite $A$, whereas the other solvers allow $A$ to be
indefinite.

If $A$ is singular, \SYMMLQ requires the system to be consistent,
whereas \MINRES returns an LS solution for (\ref{eqn1c}) but generally
not the min-length solution; see \cite{C06,CPS11} for examples.  \SQMR
without preconditioning is mathematically equivalent to \MINRES but
could fail on a singular problem.  To date, \MINRESQLP is probably the
most suitable \CG-type method for solving (\ref{eqn1c}).

In some cases the more established symmetric methods may still be
preferable.

\begin{enumerate} 
\item If $A$ is positive definite, \CG minimizes the energy norm of
  the error \hbox{$\norm{x-x_k}_A$} in each Krylov subspace and
  requires slightly less work per iteration.  However, \CG, \MINRES,
  and \MINRESQLP do reduce $\norm{x-x_k}_A$ and $\norm{x-x_k}$
  monotonically.  Also, \MINRES and \MINRESQLP often reduce
  $\norm{r_k}$ to the desired level significantly sooner than does
  \CG, and the backward error for each $x_k$ decreases monotonically.
  (See Section~\ref{sect-norms-stopping-conds} and
  \cite{Fong2011,FS12}.)

\item If $A$ is indefinite but $Ax = b$ is consistent (e.g., if $A$ is
  nonsingular), \SYMMLQ requires slightly less work per iteration, and
  it reduces the error norm $\norm{x-x_k}$ monotonically.  \MINRES and
  \MINRESQLP \emph{usually} reduce $\norm{x-x_k}$
  \cite{Fong2011,FS12}.

\item If $A$ is indefinite and well-conditioned and $A x = b$ is
  consistent, \MINRES might be preferable to \MINRESQLP because it
  requires the same number of iterations but slightly less work per
  iteration.

\item \MINRES and \MINRESQLP require a preconditioner to be positive
  definite.  \SQMR might be preferred if $A$ is indefinite and an
  effective indefinite preconditioner is available.
\end{enumerate}


\MINRESQLP has two phases.  Iterations start in the \emph{\MINRES
  phase} and transfer to the \emph{\MINRESQLP phase} when a subproblem
(see \eqref{eqn:LSsubprob} below) becomes ill-conditioned by a certain
measure.  If every subproblem is of full rank and well-conditioned,
the problem can be solved entirely in the \MINRES phase, where the
cost per iteration is essentially the same as for \MINRES.  In the
\MINRESQLP phase, one more work vector and $5n$ more multiplications
are used per iteration.

\MINRESQLP described here is implemented in \FORTRAN~90 for real
double-precision problems.  It contains no machine-dependent constants
and does not need to use features such as polymorphism from
\FORTRAN~2003 or 2008. It requires an auxiliary subroutine
\texttt{Aprod} and, if a preconditioner is supplied, a second
subroutine \texttt{Msolve}. Since \FORTRAN~90 contains the intrinsic
\texttt{COMPLEX} data type, our implementation is also adapted for
complex problems.  Precision other than double can be handily obtained
by supplying different values to the data attribute \texttt{KIND}.
The program can be compiled with \FORTRAN~90 and \FORTRAN~95 compilers
such as \texttt{f90}, \texttt{f95}, \texttt{g95}, and
\texttt{gfortran}.  We also have a \MATLAB implementation, which is
capable of solving both real and complex problems readily. All
versions are available for download at \hbox{\citeA{SOL}}.

Table~\ref{table-notation} lists the main notation used.

\begin{table}[hb]   
\caption{Key notation.}
\label{table-notation}
\begin{center}
  \begin{tabular}{|l|l|}
    \hline
     $\norm{\cdot}$         & matrix or vector two-norm
   \\[.5ex] $\Abar$         & $\Abar = A - \sigma I$ (see also $\sigma$ below)
   \\ $\cond(A)$            & condition number of $A$ with respect
                             to two-norm = $\smash[b]{\frac{\max
                             |\lambda_i|}{\min_{\lambda_i \neq 0}
                             |\lambda_i|}}$
   \\ $e_i$                 & $i$th unit vector
   \\ $\ell$                & index of the last Lanczos iteration when $\beta_{\ell+1} = 0$
   \\ $n$                   & order of $A$
   \\ $\Null(A)$            & null space of $A$ defined as
                              $\{x \in \mathbb{R}^n \mid Ax = 0 \}$
   \\ $\range(A)$           & column space of $A$ defined as
                              $\{Ax \mid x \in \mathbb{R}^n\}$                            
   \\ $\T$                  & (right superscript to a vector or a matrix) transpose
   \\ $x^{\dagger}$          & unique minimum-length least-squares solution of problem~(\ref{eqn1c}) 
   \\ $\mathcal{K}_k(A,b)$  & $k$th Krylov subspace defined as
                              $\operatorname*{span}
                             \{b, Ab, \ldots, A^{k-1}b\}$                            
   \\ $\varepsilon$         & machine precision
   \\ $\sigma$              & scalar shift to diagonal of $A$
   \\[3pt] \hline
  \end{tabular}
\end{center}
\end{table}

\subsection{Least-Squares Methods}

Further existing methods that could be applied to \eqref{eqn1c} are
\CGLS and \LSQR \cite{PS82a,PS82b}, \LSMR \cite{FS11}, and \GMRES
\cite{SS86}, all of which reduce $\norm{r_k}$ monotonically.  The
first three methods would require two products $A v_k$ and $A u_k$
each iteration and would be generating points in less favorable
subspaces. \GMRES requires only products $A v_k$ and could use any
nonsingular (possibly indefinite) preconditioner.  It needs increasing
storage and work each iteration, perhaps requiring restarts, but it
could be more effective than \MINRES or \MINRESQLP (and the other
solvers) if few total iterations were required.
Table~\ref{table-algo-work} summarizes the computational requirements
of each method.

\begin{table}
\caption{Comparison of various least-squares solvers on $n \times n$
  systems (\ref{eqn1c}).  Storage refers to memory required by working
  vectors in the solvers. Work counts number of floating-point
  multiplications.  On inconsistent systems, all solvers below except
  MINRES and GMRES with restart parameter $m$ return the
  minimum-length LS solution (assuming no preconditioner).}
\label{table-algo-work}
\begin{center}
  \begin{tabular}{|l|c|c|c|c|}
    \hline
  \\[-10pt]
    Solver      & Storage  & Work per   & Products per  & Systems to Solve per 
  \\[-1pt]      &          & Iteration  &  Iteration    & Iteration with Preconditioner
  \\ \hline $ $ &          &   
  \\[-9.5pt]
     MINRES     & $7n$     & $9n$       & 1 &  1
  \\ MINRES-QLP &$7n$--$8n$& $9n$--$14n$& 1 &  1
  \\ GMRES($m$) & $(m+2)n$ & $(m+3+1/m)n$ & 1 &  1
  \\ CGLS       & $4n$     & $5n$       & 2 &  2
  \\ LSQR       & $5n$     & $8n$       & 2 &  2
  \\ LSMR       & $6n$     & $9n$       & 2 &  2
  \\ \hline
  \end{tabular}
\end{center}
\end{table}

\subsection{Regularization}    
\label{sect-regularization}

We do not discourage using \CGLS, \LSQR, or \LSMR if the goal is to
regularize an ill-posed problem using a small damping factor $\lambda
> 0$ as follows:
\begin{equation} \min_x \,\normm{\bmat{A \\ \lambda I}x-\bmat{b\\0}}.
       \label{eqn-reg}
\end{equation}
However, this approach destroys the original problem's symmetry.  The
normal equation of (\ref{eqn-reg}) is ($A^2 + \lambda^2 I) x = Ab$,
which suggests that a diagonal shift to $A$ may well serve the same
purpose in some cases. For symmetric positive-definite $A$, $\Abar =
A-\sigma I$ with $\sigma < 0$ enjoys a smaller condition number.  When
$A$ is indefinite, a good choice of $\sigma$ may not exist, for
example, if the eigenvalues of $A$ were symmetrically positioned
around zero. When this symmetric form is applicable, it is convenient
in \MINRES and \MINRESQLP; see (\ref{eqn1c}), (\ref{recur}), and
(\ref{precon-lanzos}).  We also remark that \MINRES and \MINRESQLP
produce good estimates of the largest and smallest singular values of
$\Abar$ (via diagonal values of $R_k$ or $L_k$ in (\ref{right_ortho})
and (\ref{Lsubproblem}); see \cite[Section~4]{CPS11}).

Three other regularization tools in the literature (see
\cite[Sections~12.1.1-12.1.3]{GV} and \cite{H98}) are LSQI,
cross-validation, and L-curve. LSQI involves solving a nonlinear
equation and is not immediately compatible with the Lanczos
framework. Cross-validation takes one row out at a time and thus does
not preserve symmetry.  The L-curve approach for a CG-type method
takes iteration $k$ as the regularization parameter \cite[Chapter
  8]{H98} if both $\norm{r_k}$ and $\norm{x_k}$ are monotonic.  By
design, $\norm{r_k}$ is monotonic in \MINRES and \MINRESQLP, and so is
$\norm{x_k}$ when $\Abar$ is positive definite \cite{Fong2011}.
Otherwise, we prefer the condition L-curve approach in \cite{CLR00},
which graphs $\cond(T_k)$ against $\norm{r_k}$. Yet another L-curve
feasible in \MINRESQLP is $\norm{x_{k-2}^{(2)}}$ against $\norm{r_k}$,
since the former is also monotonic (but available two iterations in
lag); see Section~\ref{sect-norms-stopping-conds}.



\section{MATHEMATICAL BACKGROUND}

Notation and details of algorithmic development from \cite{C06,CPS11}
are summarized here.

\subsection{Lanczos Process}

\MINRES and \MINRESQLP use the symmetric Lanczos process \cite{L50} to
reduce $A$ to a tridiagonal form $\underline{T_k}$. The process is
initialised with $v_0 \equiv 0$, $\beta_1 = \norm{b}$, and $\beta_1
v_1=b$.  After $k$ steps of the tridiagonalization, we have produced
\begin{align}
      p_k &= Av_k - \sigma v_k, \qquad \alpha_k = v_k\T p_k,
     \qquad \beta_{k+1}v_{k+1} = p_k-\alpha_kv_k-\beta_kv_{k-1},
   \label{recur}
\end{align}
where we choose $\beta_k > 0$ to give $\norm{v_k}=1$.
Numerically,
\begin{align*}
p_k = Av_k - \sigma v_k -\beta_kv_{k-1},
\qquad \alpha_k = v_k\T p_k, \qquad \beta_{k+1}v_{k+1}
= p_k - \alpha_k v_k
\end{align*}
is slightly better than (\ref{recur}) \cite{P76}, but we can express
(\ref{recur}) in matrix form:
\begin{align}
  V_k &\equiv  \bmat{v_1 & \!\cdots\! & v_k},
  \qquad AV_k = V_{k+1}\underline{T_k},
  \qquad \underline{T_k}  \equiv  \bmat{T_k \\ \beta_{k+1}e_k\T},
  \label{recur_matrix}
\end{align}
where $T_k = \textrm{tridiag}(\beta_i, \alpha_i, \beta_{i+1})$, $i=1,
\dots, k$.
In exact arithmetic, the Lanczos vectors in the columns of $V_k$ are
orthonormal, and the process stops with $k = \ell$ when
$\beta_{\ell+1}=0$ for some $\ell \le n$, and then $AV_\ell = V_\ell
T_\ell$. The rank of $T_\ell$ could be $\ell$ or $\ell-1$ (see
Theorem~\ref{theorem-Tk}).

\subsection{MINRES Phase}

\MINRESQLP typically starts with a \MINRES phase, which applies a
series of reflectors $Q_k$ to transform $\underline{T_k}$ to an upper
triangular matrix $\underline{R_k}$:
\begin{align}
       Q_k \bmat{\,\underline{T_k} & \beta_1 e_1} &= \bmat{R_k & t_k
         \\ 0 & \phi_k} \equiv\bmat{\,\underline{R_k} &
         \bar{t}_{k+1}}, \label{right_ortho}
\end{align}
where
\begin{align*}
      Q_k &= Q_{k,k+1} \bmat{Q_{k-1} \\ & 1},   \qquad
      Q_{k,k+1}  \equiv
       \left[\begin{smallmatrix}
           I_{k-1}&&
        \\        & c_k &   \!\!\phantom-s_k
        \\        & s_k &   \!-c_k
        \end{smallmatrix}\right]\!\!.      \nonumber \end{align*}
In the $k$th step, $Q_{k,k+1}$ is effectively a Householder reflector
of dimension 2 \cite[Exercise 10.4]{TB}; and its action including its
effect on later columns of $T_j$, $k < j \le \ell$, is compactly
described by
\begin{equation*}  \label{min7}
  \bmat{   c_k & \!\!\!\phantom-s_k
        \\ s_k & \!\! -c_k}
  \bmat{\begin{matrix}
           \gamma_k & \delta_{k+1} & 0
        \\ \beta_{k+1}    & \alpha_{k+1}       & \beta_{k+2}
        \end{matrix}
        & \biggm| &
        \begin{matrix} \phi_{k-1} \\ 0 \end{matrix}
       }
=
  \bmat{\begin{matrix}
           \gamma_k^{(2)} & \delta_{k+1}^{(2)} & \epsilon_{k+2}
        \\ 0              & \gamma_{k+1} & \delta_{k+2}
        \end{matrix}
        & \biggm| &
        \begin{matrix} \tau_k \\ \phi_k \end{matrix}
       },
\end{equation*}
where the superscripts with numbers in parentheses indicate the number
of times the values have been modified.  The $k$th solution
approximation to (\ref{eqn1c}) is then defined to be $x_k = V_k y_k$,
where $y_k$ solves the subproblem
\begin{equation}
  y_k = \arg\min_{ y \in \mathbb{R}^k }
        \norm{ \underline{T_k} y - \beta_1 e_1 }
        = \arg\min_{ y \in \mathbb{R}^k }
        \norm{ \underline{R_k} y - {\bar t}_{k+1} }.
  \label{eqn:LSsubprob}
\end{equation}
When $k < \ell$, $R_k$ is nonsingular and the unique solution of the
above subproblem satisfies $R_k y_k = t_k$.  Instead of solving for
$y_k$, \MINRES solves $R_k\T D_k\T = V_k\T$ by forward substitution,
obtaining the last column $d_k$ of $D_k$ at iteration $k$.  At the
same time, it updates $x_k \in \mathcal{K}_k(A,b)$ (see
Table~\ref{table-notation} for definition) via $x_0 \equiv 0$ and
\begin{equation}
    x_k = V_k y_k = D_k R_k y_k = D_k t_k
  = x_{k-1} + \tau_k d_k,\quad \tau_k\equiv e_k^Tt_k,
\label{minresxk}
\end{equation}
where one can show using $V_k = D_k R_k$ that
$  d_{k} = ({v_{k}-\delta_{k}^{(2)}d_{k-1} - \epsilon_{k}
   d_{k-2}}) / {\gamma_{k}^{(2)}}. 
$

\subsection{MINRES-QLP Phase}

The \MINRES phase transfers to the \MINRESQLP phase when an estimate
of the condition number of $A$ exceeds an input parameter
$\mathit{trancond}$.  Thus, $\mathit{trancond} > 1/\varepsilon$ leads
to \MINRES iterates throughout (where $\varepsilon \approx 10^{-16}$
denotes the floating-point precision), whereas $\mathit{trancond} = 1$
generates \MINRESQLP iterates from the start.

Suppose for now that there is no \MINRES phase. Then \MINRESQLP
applies left reflections as in (\ref{right_ortho}) and a further
series of right reflections to transform $R_k$ to a lower triangular
matrix $L_k = R_k P_k$, where
\begin{align*}
     P_k &=  P_{1,2} \ \ P_{1,3} P_{2,3} 
       \ \cdots \ \ P_{k-2,k} P_{k-1,k},
\\ P_{k-2,k} &=  \left[\begin{smallmatrix}
           I_{k-3}&&
        \\        & c_{k2} &    & \!\!\phantom-s_{k2}
        \\        &        & 1  &
        \\        & s_{k2} &    & \!-c_{k2}
        \end{smallmatrix}\right]\!\!,
\qquad P_{k-1,k} =  \left[\begin{smallmatrix}
           I_{k-2}&&
        \\        & c_{k3} &   \!\!\phantom-s_{k3}
        \\        & s_{k3} &   \!-c_{k3}
        \end{smallmatrix}\right]\!\!.
\end{align*}
In the $k$th step, the actions of $P_{k-2,k}$ and $P_{k-1,k}$ are
compactly described by
\begin{align}
 &\hspace*{13pt}
  \bmat{\gamma_{k-2}^{(5)} &                    & \epsilon_k
     \\ \vartheta_{k-1}    & \gamma_{k-1}^{(4)} & \delta_k^{(2)}
     \\                    &                    & \gamma_{k}^{(2)}
       }
  \bmat{c_{k2} &   & \!\!\!\phantom-s_{k2}
     \\        & 1 &
     \\ s_{k2} &   & \!\!-c_{k2}
       }
  \bmat{ 1
     \\    & c_{k3} & \!\!\!\phantom-s_{k3}
     \\    & s_{k3} & \!\!-c_{k3}
       }
       \nonumber
\\ &=
  \bmat{\gamma_{k-2}^{(6)}\vspace{3pt} 
     \\ \vartheta_{k-1}^{(2)} & \gamma_{k-1}^{(4)}  & \delta_k^{(3)}
     \\ \eta_{k}              &                     & \gamma_{k}^{(3)}
       }
  \bmat{ 1 \vspace{3pt} 
     \\    & c_{k3} & \!\!\!\phantom-s_{k3} \vspace{3pt} 
     \\    & s_{k3} & \!\!-c_{k3}
       }
= \bmat{\gamma_{k-2}^{(6)}\vspace{3pt} 
     \\ \vartheta_{k-1}^{(2)} & \gamma_{k-1}^{(5)}  &
     \\ \eta_{k}              & \vartheta_{k}       & \gamma_{k}^{(4)}
       }.                        \label{Lk-compact}
\end{align}
The $k$th approximate solution to (\ref{eqn1c}) is then defined to be
$x_k = V_k y_k = V_k P_k u_k = W_k u_k$, where $u_k$ solves the
subproblem
\begin{equation}  \label{Lsubproblem}
  u_k\equiv \arg \min_u \norm{u} \quad\text{s.t.}\quad
        u \in \arg \min_{u \in \mathbb{R}^k}
          \, \left\|\bmat{L_k \\ 0} u - \bmat{t_k \\ \phi_k}\right\|.
\end{equation}
For $k<\ell$, $R_k$ and $L_k$ are nonsingular because
$\underline{T_k}$ has full column rank by
Lemma~\ref{lemma-underline-Tk} below.  It is only when $k=\ell$ and
$b\notin \range(A)$ that $R_k$ and $L_k$ are singular with rank
$\ell-1$ by Theorem~\ref{theorem-Tk}, in which case one can show that
$\eta_{k} = \gamma_{k}^{(3)} = \vartheta_{k} = \gamma_{k}^{(4)} = 0$
in~(\ref{Lk-compact}) and $L_\ell = \left[\begin{smallmatrix}
    L_{\ell-1} & 0 \\ 0 & 0 \end{smallmatrix}\right]$ with
$L_{\ell-1}$ nonsingular.  In any case, we need to solve only the
nonsingular lower triangular systems $L_k u_k = t_k$ or $L_{\ell-1}
u_{\ell-1} = t_{\ell-1}$. Then, $u_k$ and $y_k = P_k u_k$ are the
min-length solutions of (\ref{Lsubproblem}) and (\ref{eqn:LSsubprob}),
respectively.

\MINRESQLP updates $x_{k-2}$ to obtain $x_k$ by short-recurrence
orthogonal steps:
\begin{align}
   x_{k-2}^{(2)} &= x_{k-3}^{(2)} + \mu_{k-2}^{(3)} w_{k-2}^{(4)}
   \text{, where } x_{k-3}^{(2)} \equiv W_{k-3}^{(4)} u_{k-3}^{(3)},
   \label{qlpeqnsol1} \\ x_k &= x_{k-2}^{(2)} +
   \mu_{k-1}^{(2)} w_{k-1}^{(3)} + \mu_k w_k^{(2)}.
   \label{qlpeqnsol2}
\end{align}
Here $w_j$ refers to the $j$th column of $W_k = V_k P_k$, and $\mu_i$
is the $i$th element of $u_k$.

If this phase is preceded by a \MINRES phase of $k$ iterations ($0 < k
< \ell$), it starts by transferring the last three vectors $d_{k-2}$,
$d_{k-1}$, $d_k$ to $w_{k-2}$, $w_{k-1}$, $w_k$, and the solution
estimate $x_k$ from (\ref{minresxk}) to $x_{k-2}^{(2)}$ in
(\ref{qlpeqnsol1}).  This needs the last two rows of $L_k u_k = t_k$
(to give $\mu_{k-1}$, $\mu_k$) and the relations $W_k = D_k L_k$ and
$x_{k-2}^{(2)} = x_k - \mu_{k-1} w_{k-1} - \mu_k w_k$.
The cheaply available right reflections $P_k$ and the bottom right $3
\times 3$ submatrix of $L_k$ (i.e., the last term
in~(\ref{Lk-compact})) need to have been saved in the \MINRES phase in
order to facilitate the transfer.

\subsection{Norm Estimates and Stopping Conditions}
\label{sect-norms-stopping-conds}

Short-term recurrences are used to estimate the following quantities:
\begin{align*}
  \norm{r_k} &\approx \phi_k = \phi_{k-1} s_k,
             &\phi_0 &= \norm{b} 
             &(\phi_k \searrow)
\\[1ex]
  \norm{Ar_k}&\approx \psi_k = \phi_k \norm{[\gamma_{k+1}\ \;\delta_{k+2}]},
  &  &
  & (\psi_\ell = 0)
\\[1ex]
   \norm{x_k^{(2)}} &\approx \chi_{k-2}^{(2)} =  \norm{[ \chi^{(2)}_{k-3}  \  \; \mu_{k-2}^{(3)}]},
    &  \chi_{-2}\drop  = \chi_{-1}\drop   &= 0 
    & (\chi_{k-2}^{(2)} \nearrow)
\\[1ex]
  \norm{x_k} &\approx \chi_k\drop = \norm{[\chi_{k-2}^{(2)}\ \;\mu_{k-1}^{(2)}\ \;\mu_k]},
             &\chi_0\drop &=  0
             &(\chi_{\ell}\drop = \norm{x^\dagger})
\\[1ex]
  \norm{A x_k} &\approx \omega_k = \norm{[\omega_{k-1} \ \; \tau_k]},
           &\omega_0 &= 0
           &(\omega_k \nearrow)
\\[1ex]
  \norm{A} &\approx \mathcal{A}_k =
            \max \left\{ \mathcal{A}_{k-1}, \norm{\underTk e_k}, \gamamax_k \right\},
           &\mathcal{A}_0 &= 0
           &(\mathcal{A}_k \nearrow  \norm{A})
\\[0.5ex]
  \cond(A) &\approx \kappa_k = \mathcal{A}_k / \gamamin_k,
           &\kappa_0 &= 1
           &(\kappa_k \nearrow \cond(A))
\end{align*}
where $\gamamax_k$ and $\gamamin_k$ are the largest and smallest
absolute values of diagonals of $L_k$, respectively.  The up (down)
arrows in parentheses indicate that the quantities are monotonic
increasing (decreasing) if such properties exist. The last two
estimates tend to their targets from below; see~\cite{C06,CPS11} for
derivation.

\MINRESQLP has 14 possible stopping conditions in five classes that
use the above estimates and optional user-input parameters
$\mathit{itnlim}$, $\mathit{rtol}$, $\mathit{Acondlim}$, and
$\mathit{maxxnorm}$:
\begin{enumerate}[\upshape (C1)]
\item From Lanczos and the QLP factorization:  \label{C1}
\begin{align*}
  k = \mathit{itnlim};
  \qquad
  \beta_{k+1} < \varepsilon;
  \qquad
  \bigl| \gamma_k^{(4)} \bigr| < \varepsilon;
\end{align*}
\item Normwise relative backward errors (NRBE) \cite{PS02}: \label{C2}
\begin{align*}
  \norm{r_k}  / \left( \norm{A} \norm{x_k} + \norm{b} \right)
              \le \max(\mathit{rtol},\varepsilon);
  \qquad
  \norm{A r_k} / \left( \norm{A} \norm{r_k} \right)
              \le \max(\mathit{rtol},\varepsilon);
\end{align*}
\item Regularization attempts: \label{C3}
\begin{align*}
  \cond(A) \ge \mathit{\min(Acondlim,0.1/\varepsilon)};
  \qquad
  \norm{x_k}  \ge \mathit{maxxnorm};
\end{align*}
\item Degenerate cases: \label{C4}
\begin{align*}
   \beta_1 &=0  \qquad \Rightarrow \qquad b\phantom{_2}=0 \qquad  \Rightarrow
                \qquad x=0 \mbox{ is the solution};
\\ \beta_2 &=0  \qquad \Rightarrow \qquad v_2=0
                \qquad \Rightarrow \qquad A b = \alpha_1 b,
\\         &    \qquad \mbox{i.e., } b \mbox{ and } \alpha_1
                \mbox{ are an eigenpair of } A,
                \mbox{ and } x=b/\alpha_1\mbox{ solves } A x=b;
\end{align*}
\item Erroneous inputs: \label{C5}
\begin{align*}
    A \mbox{ not symmetric}; \qquad
    M \mbox{ not symmetric}; \qquad 
    M \mbox{ not positive definite};
\end{align*}
where $M$ is a preconditioner to be described in the next section. For
symmetry of $A$, it is not practical to check $e_i\T A e_j = e_j\T A
e_i$ for all $i,j=1,\dots,n$. Instead, we statistically test whether
$z=|x\T(Ay)-y\T(Ax)|$ is sufficiently small for two nonzero
$n$-vectors $x$ and $y$ (e.g., each element in the vectors is drawn
from the standard normal distribution).  For positive definiteness of
$M$, since $M$ is positive definite if and only if $M^{-1}$ is
positive definite, we simply test that $z_k\T M^{-1} z_k = z_k\T q_k >
0$ each iteration (see Section~\ref{sect-precond}).
\end{enumerate}

We find that the recurrence relations for $\phi_k$ and $\psi_k$ hold
to high accuracy. Thus $x_k$ is an acceptable solution of
(\ref{eqn1c}) if the computed value of $\phi_k $ or $\psi_k $ is
suitably small according to the NRBE tests in class~(C\ref{C2})
above. When a condition in~(C\ref{C3}) is met, the final $x_k$ may or
may not be an acceptable solution.

The class~(C\ref{C1}) tests for small $\beta_{k+1}$ and
$\gamma_k^{(4)}$ are included in the unlikely case in practice that
the theoretical Lanczos termination occurs.  Ideally one of the NRBE
tests should cause \MINRESQLP to terminate.  If not, it is an
indication that the problem is very ill-conditioned, in which case the
regularization and preconditioning techniques of
Sections~\ref{sect-regularization} and~\ref{sect-precond} may be
helpful.


\subsection{Two Theorems}

We complete this section by presenting two theorems from \cite{CPS11}
with slightly simpler proofs.

\begin{lemma}\label{lemma-underline-Tk}
$\rank(\underline{T_k}) =  k$ for all $k<\ell$.
\end{lemma}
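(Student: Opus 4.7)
The plan is to exhibit a nonsingular $k\times k$ submatrix of $\underline{T_k}$, which is enough since $\underline{T_k}$ has only $k$ columns.

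First I would record a structural fact about the $\beta$'s. By the definition of $\ell$ in Table~\ref{table-notation}, $\ell$ is the \emph{first} Lanczos index at which $\beta_{\ell+1}=0$, and the recurrence~\eqref{recur} normalizes each intermediate $v_{j+1}$ so that $\beta_{j+1}>0$ for every $j<\ell$. Consequently, for any $k<\ell$, all the subdiagonal entries $\beta_2,\beta_3,\dots,\beta_{k+1}$ that appear in
\begin{equation*}
\underline{T_k}=\bmat{T_k\\ \beta_{k+1}e_k\T}
  =\bmat{\alpha_1 & \beta_2 & & \\
         \beta_2 & \alpha_2 & \ddots & \\
                 & \ddots   & \ddots  & \beta_k \\
                 &          & \beta_k & \alpha_k \\
                 &          &         & \beta_{k+1}}
\end{equation*}
are strictly positive.

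Second, I would delete the first row of $\underline{T_k}$ and look at the resulting $k\times k$ matrix $S_k$ consisting of rows $2,\dots,k+1$. Because $\underline{T_k}$ is tridiagonal with a final row containing only $\beta_{k+1}$ in the last position, this submatrix $S_k$ is \emph{upper} triangular with diagonal entries $\beta_2,\beta_3,\dots,\beta_{k+1}$. Its determinant equals $\prod_{j=2}^{k+1}\beta_j$, which is nonzero by the previous step, so $\rank(S_k)=k$. Since $S_k$ is a submatrix of $\underline{T_k}$, we conclude $\rank(\underline{T_k})\ge k$; the reverse inequality is automatic, so $\rank(\underline{T_k})=k$.

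There is really no hard step here: the argument is a one-line observation once one notices the triangular submatrix obtained by stripping off the first row. The only point to be careful about is the range of $k$: we need $k+1\le\ell$ so that $\beta_{k+1}$ is guaranteed to be positive, which is exactly the hypothesis $k<\ell$. This argument will be reused in Theorem~\ref{theorem-Tk} to distinguish the terminal case $k=\ell$, where $\beta_{\ell+1}=0$ can cause $\underline{T_\ell}=T_\ell$ to drop rank.
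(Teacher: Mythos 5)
Your proof is correct and follows essentially the same route as the paper: both arguments rest on the fact that $\beta_2,\dots,\beta_{k+1}>0$ for $k<\ell$, and the paper's one-line ``hence $\underline{T_k}$ has full column rank'' is exactly what you make explicit by exhibiting the upper triangular $k\times k$ submatrix (rows $2$ through $k+1$) with determinant $\prod_{j=2}^{k+1}\beta_j\neq 0$. Your version simply spells out the standard detail the paper leaves implicit.
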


\begin{proof}
  For $k < \ell$ we have $\beta_1,\dots,\beta_{k+1} > 0$ by
  definition.  Hence $\underline{T_k}$ has full column rank.
\end{proof}

\begin{theorem} \label{theorem-Tk}
$T_\ell$ is nonsingular if and only if $b \in \range(A)$.
  Furthermore, $\rank(T_\ell) = \ell-1$ if $b \notin \range(A)$.
\end{theorem}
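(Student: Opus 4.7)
The plan hinges on two structural observations. First, $T_\ell = V_\ell\T A V_\ell$ is symmetric (inherited from $A$) with strictly positive subdiagonals $\beta_2,\dots,\beta_\ell$, since Lanczos did not terminate before step $\ell$; so $T_\ell$ is an unreduced symmetric tridiagonal matrix. Second, the termination condition $\beta_{\ell+1}=0$ reduces (\ref{recur_matrix}) at $k=\ell$ from $AV_\ell = V_{\ell+1}\underline{T_\ell}$ to the compact identity $AV_\ell = V_\ell T_\ell$, which makes $\range(V_\ell)=\mathcal{K}_\ell(A,b)$ an $A$-invariant subspace and lets me translate questions about $T_\ell$ into questions about $A$ restricted to that subspace.

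I would then establish the equivalence by proving the two implications directly. For ($\Leftarrow$), assume $b = Ac$ for some $c$; then $A^j b = A^{j+1} c \in \range(A)$ for every $j \ge 0$, so $\mathcal{K}_\ell(A,b) \subseteq \range(A)$. Any $y$ with $T_\ell y = 0$ would give $AV_\ell y = V_\ell T_\ell y = 0$, placing $V_\ell y$ in $\Null(A) \cap \range(A) = \{0\}$ (using symmetry of $A$); orthonormality of the columns of $V_\ell$ then forces $y = 0$. For ($\Rightarrow$), invert: set $y = \beta_1 T_\ell\inv e_1$ and $x = V_\ell y$. The compact Lanczos identity gives $Ax = V_\ell T_\ell y = \beta_1 v_1 = b$, so $b \in \range(A)$.

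For the rank statement I would first show, unconditionally, that $\rank(T_\ell) \ge \ell-1$: rows $2,\dots,\ell$ of $T_\ell$ form a staircase with nonzero leading entries $\beta_2,\dots,\beta_\ell$, so they are linearly independent. Combined with the equivalence just proved, $b \notin \range(A)$ forces $T_\ell$ to be singular, and therefore $\rank(T_\ell) = \ell-1$ exactly. I do not foresee any real obstacle; the only point requiring a little care is invoking the reduced relation $AV_\ell = V_\ell T_\ell$ (valid only at $k=\ell$) rather than the general underlined form, and recognizing that the unreduced tridiagonal structure is precisely what prevents the rank from dropping below $\ell-1$.
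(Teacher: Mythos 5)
Your proposal is correct and follows essentially the same route as the paper: both directions use the compact relation $AV_\ell = V_\ell T_\ell$, with $\Null(A)\cap\range(A)=\{0\}$ (symmetry) plus $\range(V_\ell)\subseteq\range(A)$ for the ``if'' direction, and the explicit solve $T_\ell y=\beta_1 e_1$ for the ``only if'' direction. Your unconditional bound $\rank(T_\ell)\ge\ell-1$ via the staircase of nonzero $\beta_2,\dots,\beta_\ell$ is just a row-wise restatement of the paper's appeal to Lemma~\ref{lemma-underline-Tk} applied to $\underline{T_{\ell-1}}$, so the arguments coincide in substance.
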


\begin{proof}
We use $A V_\ell = V_\ell T_\ell$ twice.  First, if $T_\ell$ is
nonsingular, we can solve $T_\ell y_\ell = \beta_1 e_1$ and then $A
V_\ell y_\ell = V_\ell T_\ell y_\ell = V_\ell \beta_1 e_1 = b$.
Conversely, if $b \in \range(A)$, then $\range(V_\ell) \subseteq
\range(A)$.  Suppose $T_\ell$ is singular. Then there exists $z \ne 0$
such that $V_\ell T_\ell z = A V_\ell z=0$. That is, $0 \ne V_\ell z
\in \Null(A)$.  But this is impossible because $V_\ell z \in
\range(A)$ and $\Null(A) \cap \range(V_\ell) = 0$.  Thus, $T_\ell$
must be nonsingular.

We have shown that if $b \notin \range(A)$, $T_\ell =
\bmat{\underline{T_{\ell-1}}&
      \begin{smallmatrix}\beta_\ell e_{\ell-1}
                         \\[1pt] \alpha_\ell
      \end{smallmatrix}}$
is singular, and therefore $\ell > \rank(T_\ell) \ge
\rank(\underline{T_{\ell-1}}) = \ell-1$ by
Lemma~\ref{lemma-underline-Tk}. Therefore, $\rank(T_\ell) = \ell-1$.
\end{proof}

By Lemma~\ref{lemma-underline-Tk} and Theorem~\ref{theorem-Tk} we are
assured that the QLP decomposition without column pivoting
\cite{S99,CPS11} for $\underline{T_k}$ is rank-revealing, which is a
necessary precondition for solving a least-squares problem.

\begin{theorem}   \label{theorem-MINRES-QLP}
In \MINRESQLP, $x_\ell$ is the minimum-length solution of
(\ref{eqn1c}).
\end{theorem}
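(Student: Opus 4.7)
The plan is to split on whether $b \in \range(A)$ and use Theorem~\ref{theorem-Tk} in each case. If $b \in \range(A)$, then $T_\ell$ is nonsingular, so both $R_\ell$ and $L_\ell$ are nonsingular; \MINRESQLP returns $x_\ell = V_\ell y_\ell$ with $y_\ell = P_\ell L_\ell^{-1} t_\ell$, and unwinding the reflectors gives $T_\ell y_\ell = \beta_1 e_1$. Multiplying by $V_\ell$ and using $AV_\ell = V_\ell T_\ell$ yields $Ax_\ell = b$. Because $b \in \range(A)$ implies $\mathcal{K}_\ell(A,b) \subseteq \range(A)$, we have $x_\ell \in \range(A) = \Null(A)^\perp$ (using symmetry of $A$), so $x_\ell$ is the unique consistent solution orthogonal to $\Null(A)$, i.e.\ the pseudoinverse solution.

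If $b \notin \range(A)$, by Theorem~\ref{theorem-Tk} $\rank(T_\ell) = \ell-1$ and $L_\ell$ has the block form with $L_{\ell-1}$ nonsingular. The min-length solution of \eqref{Lsubproblem} is then $u_\ell = [\,u_{\ell-1}\T,\,0\,]\T$ with $L_{\ell-1} u_{\ell-1} = t_{\ell-1}$. Since $P_\ell$ is orthogonal, $\|y_\ell\| = \|u_\ell\|$, so $y_\ell := P_\ell u_\ell$ is the min-length least-squares solution of $\min_y\|T_\ell y - \beta_1 e_1\|$. Then using $b = \beta_1 V_\ell e_1$, $AV_\ell = V_\ell T_\ell$, and orthonormality of the columns of $V_\ell$, we get the two isometries
\begin{equation*}
  \|A V_\ell y - b\| = \|T_\ell y - \beta_1 e_1\|, \qquad
  \|V_\ell y\| = \|y\|,
\end{equation*}
so $x_\ell = V_\ell y_\ell$ is the min-length LS solution of $\min_x\|Ax-b\|$ \emph{restricted to} $x \in \range(V_\ell)$.

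The main obstacle is upgrading this conclusion from $\range(V_\ell)$ to all of $\mathbb{R}^n$, i.e.\ showing $x^\dagger := A^\dagger b$ lies in $\range(V_\ell)$ in the inconsistent case. The idea is to exploit symmetry: $U := \range(V_\ell)$ is $A$-invariant (from $AV_\ell = V_\ell T_\ell$), and since $A = A\T$ the orthogonal complement $U^\perp$ is also $A$-invariant. Hence $A$ decomposes as $A|_U \oplus A|_{U^\perp}$, and so does $A^\dagger$. Because $b = \beta_1 v_1 \in U$, we get $A^\dagger b = (A|_U)^\dagger b \in U$. Combined with the previous paragraph, $x_\ell$ coincides with $A^\dagger b$, which is $x^\dagger$ by definition.

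The two cases together establish $x_\ell = x^\dagger$; the work is essentially linear algebra plus invocation of Theorem~\ref{theorem-Tk}, with the symmetry-invariance argument in the last step being the only nontrivial ingredient.
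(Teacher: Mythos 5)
Your argument is correct, but it follows a genuinely different route from the paper. The paper gives a single, case-free proof via normal equations: since $y_\ell$ is the min-length LS solution of $T_\ell\hthinsp y \approx \beta_1 e_1$, it satisfies $T_\ell^2 y_\ell = T_\ell \beta_1 e_1$ and $y_\ell \in \range(T_\ell)$; pushing these through $AV_\ell = V_\ell T_\ell$ gives $A^2 x_\ell = Ab$ (which, by symmetry, is exactly the normal equation $A\T\! A x = A\T b$ of (\ref{eqn1c})) and $x_\ell = A V_\ell z \in \range(A) = \range(A\T)$, and these two facts alone characterize the pseudoinverse solution---no distinction between the consistent and inconsistent cases is needed. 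You instead split on $b \in \range(A)$, invoke Theorem~\ref{theorem-Tk} and the QLP subproblem structure, use the isometries $\norm{AV_\ell y - b} = \norm{T_\ell y - \beta_1 e_1}$ and $\norm{V_\ell y} = \norm{y}$ to identify $x_\ell$ as the min-length LS solution restricted to $\range(V_\ell)$, and then remove the restriction with an invariant-subspace argument: $\range(V_\ell)$ is $A$-invariant, hence (by symmetry) so is its orthogonal complement, the pseudoinverse decomposes accordingly, and $A^\dagger b \in \range(V_\ell)$. That last idea is attractive because it proves something of independent interest---the pseudoinverse solution lies in the final Krylov subspace $\mathcal{K}_\ell(A,b)$ even when the system is inconsistent---whereas the paper's argument obtains the theorem more economically without exhibiting this fact. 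One small step you leave implicit deserves a line: having shown $x^\dagger = A^\dagger b \in \range(V_\ell)$, you should note that the restricted LS-solution set is $(x^\dagger + \Null(A)) \cap \range(V_\ell)$ and that any element $x^\dagger + z$ of it with $z \in \Null(A)$ satisfies $\norm{x^\dagger + z}^2 = \norm{x^\dagger}^2 + \norm{z}^2$, so the restricted min-norm LS solution (which you showed is $x_\ell$) is exactly $x^\dagger$. With that sentence added, your proof is complete.
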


\begin{proof}

$y_\ell$ comes from the min-length LS solution of $T_\ell y_\ell
  \approx \beta_1 e_1$ and thus satisfies the normal equation $T_\ell
  ^2 y_\ell = T_\ell \beta_1 e_1$ and $y_\ell \in \range(T_\ell)$.
  Now $x_\ell = V_\ell y_\ell$ and $A x_\ell = A V_\ell y_\ell =
  V_\ell T_\ell y_\ell$.  Hence $A^2 x_\ell = A V_\ell T_\ell y_\ell =
  V_\ell T_\ell ^2 y_\ell = V_\ell T_\ell \beta_1 e_1 = A b$. Thus
  $x_\ell$ is an LS solution of (\ref{eqn1c}).  Since $y_\ell \in
  \range(T_\ell)$, $y_\ell = T_\ell z$ for some $z$, and so $x_\ell =
  V_\ell y_\ell = V_\ell T_\ell z = A V_\ell z \in \range(A)$ is the
  min-length LS solution of (\ref{eqn1c}).
\end{proof}

\section{PRECONDITIONING} \label{sect-precond}

Iterative methods can be accelerated if preconditioners are available
and well-chosen. For \MINRESQLP, we want to choose a symmetric
positive-definite matrix $M$ to solve a nonsingular system
(\ref{eqn1a}) by implicitly solving an equivalent symmetric consistent
system $M^{-\myhalf}A M^{-\myhalf}\xbar=\bar{b}$, where $M^{\myhalf}
x=\xbar$, $\bar{b} = M^{-\myhalf}b$, and $\cond{(M^{-\myhalf}A
  M^{-\myhalf})} \ll \cond{(A)}$.  This two-sided preconditioning
preserves symmetry. Thus we can derive preconditioned \MINRESQLP by
applying \MINRESQLP to the equivalent problem and obtain
$x=M^{-\myhalf}\xbar$.

With preconditioned \MINRESQLP, we can solve a singular consistent
system~(\ref{eqn1b}), but we will obtain a least-squares solution that
is not necessarily the minimum-length solution (unless $M=I$).  For
inconsistent systems (\ref{eqn1c}), preconditioning alters the
least-squares norm to $\norm{\cdot}_{M^{-1}}$, and the solution is of
minimum length in the new norm space. We refer readers to
\cite[Section 7]{CPS11} for a detailed discussion of various
approaches to preserving the two-norm ``minimum length.''

To derive \MINRESQLP, we define
\begin{equation}  \label{pminresd4}
   z_k = \beta_k M^{ \myhalf} v_k, \qquad
   q_k = \beta_k M^{-\myhalf} v_k,
   \qquad \text{so that} \quad M q_k =z_k.
\end{equation}
Then \( \beta_k = \norm{\beta_k v_k} = \norm{M^{-\myhalf} \! z_k} =
\norm{ z_k} _{M^{-1}} = \norm{ q_k }_M = \sqrt{ q_k\T z_k }, \) where
the square root is well defined because $M$ is positive definite, and
the following expressions replace the quantities in (\ref{recur}) in
the Lanczos iterations:
\begin{equation}
\label{precon-lanzos}
        p_k = A q_k - \sigma q_k, \qquad \alpha_k =
        \frac{1}{\beta_k^2} q_k\T p_k, \qquad z_{k+1} =
        \frac{1}{\beta_k} p_k - \frac{\alpha_k}{\beta_k} z_k -
        \frac{\beta_k}{\beta_{k-1}} z_{k-1}.
\end{equation}
We also need to solve the system $M q_k = z_k$ in (\ref{pminresd4}) at
each iteration.

In the \MINRES phase, we define $\bar{d}_k = M^{-\myhalf}d_k$ and
update the solution of the original problem (\ref{eqn1a}) by
\[
  \bar{d}_k = \Bigl( \frac{1}{\beta_k} q_k
                - \delta_k^{(2)} \bar{d}_{k-1}
                - \epsilon_k \bar{d}_{k-2} \Bigr)
                 \!\bigm/\! \gamma_k^{(2)},
  \qquad
  x_k = M^{-\myhalf} \xbar_k
      = x_{k-1} + \tau_k \bar{d}_k.       
\]

In the \MINRESQLP phase, we define $\overline{W}_k \equiv M^{-\myhalf}
W_k = (M^{-\myhalf} V_k) P_k $ and update the solution estimate of
problem (\ref{eqn1a}) by orthogonal steps:
\begin{align*}
   \bar{w}_k &= -(c_{k2} / \beta_k) q_k + s_{k2} \bar{w}_{k-2}^{(3)},
   \qquad \bar{w}_{k-2}^{(4)} = (s_{k2} / \beta_k) q_k
                               + c_{k2} \bar{w}_{k-2}^{(3)},
\\   \bar{w}_k^{(2)} &= s_{k3} \bar{w}_{k-1}^{(2)} - c_{k3} \bar{w}_k,
   \qquad \quad \quad \quad \bar{w}_{k-1}^{(3)}
                      = c_{k3} \bar{w}_{k-1}^{(2)} + s_{k3} \bar{w}_k,
\\  x_{k-2}^{(2)} &= x_{k-3}^{(2)} + \mu_{k-2}^{(3)} \bar{w}_{k-2}^{(4)},
   \qquad \qquad \quad  x_k = x_{k-2}^{(2)}
                            + \mu_{k-1}^{(2)} \bar{w}_{k-1}^{(3)}
                            + \mu_k \bar{w}_k^{(2)}.
\end{align*}
Let $\rbar_k = \bar{b} - M^{-\myhalf}A M^{-\myhalf} \xbar_k =
M^{-\myhalf} r_k$.  Then $x_k = M^{-\myhalf} \xbar_k$ is an acceptable
solution of (\ref{eqn1a}) if the computed value of $\phi_k \approx
\norm{\rbar_k} = \norm{r_k}_{M^{-1}}$ is sufficiently small.

We can now present our pseudocode in Algorithm~\ref{algo-pminresqlp}.
The reflectors are implemented in Algorithm~\ref{algo-symortho}
$\SymOrtho(a,b)$ for real $a$ and $b$, which is a stable form for
computing $r = \sqrt{a^2+b^2} \ge 0$ , $c = \frac{a}{r}$, and $s =
\frac{b}{r}$. The complexity is at most 6 flops and a square
root. Algorithm~\ref{algo-pminresqlp} lists all steps of \MINRESQLP
with preconditioning. For simplicity, $\bar{w}_k$ is written as $w_k$
for all relevant $k$. Also, the output $x$ solves $Ax \approx b$, but
other outputs are associated with the preconditioned system.

\begin{algo}{htp}

  \Inputs{$A, b, \sigma, M$}

  \smallskip

  $z_0 = 0$,
  \quad $z_1 = b$,
  \quad Solve $Mq_1 = z_1$,
  \quad $\beta_1 = \sqrt{b\T q_1}$,
  \quad $\phi_0 \!=\! \beta_1$
  \tcc*[r]{Initialize}

  $w_0 = w_{-1} = 0$,
  \qquad $x_{-2} = x_{-1} = x_0 = 0$\;
  $c_{0,1} \!=\! c_{0,2} \!=\! c_{0,3} \!=\! -1$,
  \quad $s_{0,1} \!=\! s_{0,2} \!=\! s_{0,3}\!=\! 0$,
  \quad $\tau_0 \!=\! \omega_0  \!=\! \chi_{-2}\drop
                \!=\! \chi_{-1}\drop \!=\! \chi_0\drop \!=\! 0$\;

  $\kappa_0 = 1$, \quad  $\mathcal{A}_0 = \delta_1 = \gamma_{-1} = \gamma_0
   = \eta_{-1} = \eta_0 = \eta_1
   = \vartheta_{-1} = \vartheta_0 = \vartheta_1
   = \mu_{-1} = \mu_0 = 0$
   
  $k=0$

  \BlankLine

  \While{no stopping condition is satisfied}{
    $k\leftarrow k+1$\;

    $p_k = Aq_k - \sigma q_k$,
    \qquad $\alpha_k = \frac{1}{\beta_k^2}q_k\T p_k$
    \tcc*[r]{Preconditioned Lanczos}

    $z_{k+1} = \frac{1}{\beta_k} p_k - \frac{\alpha_k}{\beta_k}z_k
             - \frac{\beta_k}{\beta_{k-1}} z_{k-1}$

    Solve $Mq_{k+1} = z_{k+1}$,
    \qquad $\beta_{k+1} = \sqrt{q_{k+1}\T z_{k+1}}$

    \lIf{$k = 1$}{$\rho_k = \norm{[\alpha_k \ \; \beta_{k+1}]}$}
    \lElse{$\rho_k = \norm{[\beta_k \ \; \alpha_k \ \; \beta_{k+1}]}$}

    $\delta_k^{(2)} = c_{k-1,1} \delta_k+s_{k-1,1} \alpha_k$
    \tcc*[r]{Previous left reflection\dots}

    $\gamma_k = s_{k-1,1}\delta_k -c_{k-1,1} \alpha_k$
    \tcc*[r]{on middle two entries of $\underTk e_k$\dots}

    ${\epsilon}_{k+1} = s_{k-1,1} \beta_{k+1}$
    \tcc*[r]{produces first two entries in $\underTkp e_{k+1}$}

    $\delta_{k+1} = -c_{k-1,1} \beta_{k+1}$

    $c_{k1}, s_{k1}, \gamma_k^{(2)}
      \leftarrow \SymOrtho(\gamma_k, \beta_{k+1})$
    \tcc*[r]{Current left reflection}

    $c_{k2}, s_{k2}, \gamma_{k-2}^{(6)}
      \leftarrow \SymOrtho(\gamma_{k-2}^{(5)}, \epsilon_k)$
    \tcc*[r]{First right reflection}

    $\delta_k^{(3)} = s_{k2}\vartheta_{k-1} - c_{k2} \delta_k^{(2)}$,
    \qquad $\gamma_k^{(3)} = -c_{k2}\gamma_k^{(2)}$,
    \qquad $\eta_k = s_{k2} \gamma_k^{(2)}$

    $\vartheta_{k-1}^{(2)} = c_{k2} \vartheta_{k-1}
                           + s_{k2} \delta_k^{(2)}$

    $c_{k3}, s_{k3}, \gamma_{k-1}^{(5)}
      \leftarrow \SymOrtho(\gamma_{k-1}^{(4)}, \delta_k^{(3)})$
    \tcc*[r]{Second right reflection\dots}

    $\vartheta_k = s_{k3}\gamma_k^{(3)}$,
    \qquad $\gamma_k^{(4)} = -c_{k3}\gamma_k^{(3)}$
    \tcc*[r]{to zero out $\delta_k^{(3)}$}

    $\tau_k = c_{k1} \phi_{k-1}$
    \tcc*[r]{Last element of $t_k$}

    $\phi_k = s_{k1} \phi_{k-1}$, \quad $\psi_{k-1} = \phi_{k-1}
         \norm{[\smash{\gamma_k \ \; \delta_{k+1}}]}$
    \tcc*[r]{Update $\norm{\rbar_k}$, $\norm{\tilde{A} \rbar_{k-1}}$}

    \lIf{$k=1$}{$\gamma_{\min}=\gamma_{1}$}
    \lElse{$\gamma_{\min} \leftarrow \min{ \{ \gamma_{\min},
           \gamma_{k-2}^{(6)}, \gamma_{k-1}^{(5)},
         | \gamma_k^{(4)} | \}}$}

    $\mathcal{A}_k = \max{ \{\mathcal{A}_{k-1}, \rho_k,
           \gamma_{k-2}^{(6)}, \gamma_{k-1}^{(5)}, |\gamma_k^{(4)}| \}}$
    \tcc*[r]{Update $\norm{\tilde{A}}$}

    $\omega_k = \norm{[\smash{\omega_{k-1} \ \; \tau_k}]}$,
    \qquad $\kappa_k \leftarrow \mathcal{A}_k / \gamma_{\min}$
    \tcc*[r]{Update $\norm{\tilde{A} x_k}$, $\cond(\tilde{A})$}

    $w_k = -(c_{k2} / \beta_k) q_k + s_{k2} w_{k-2}^{(3)}$
    \tcc*[f]{Update $w_{k-2}$, $w_{k-1}$, $w_k$}

    $w_{k-2}^{(4)} = (s_{k2} / \beta_k) q_k + c_{k2} w_{k-2}^{(3)}$

    \lIf{$k>2$}
        {$w_k^{(2)} = s_{k3} w_{k-1}^{(2)} - c_{k3} w_k$,
         \qquad $w_{k-1}^{(3)} = c_{k3} w_{k-1}^{(2)} + s_{k3} w_k$}

    \lIf{$k>2$}
        {$\mu_{k-2}^{(3)} = (\tau_{k-2} - \eta_{k-2} \mu_{k-4}^{(4)}
                                        - \vartheta_{k-2} \mu_{k-3}^{(3)})
                            / \gamma_{k-2}^{(6)}$}
    \tcc*[r]{Update $\mu_{k-2}$}

    \lIf{$k>1$}
        {$\mu_{k-1}^{(2)} =(\tau_{k-1} - \eta_{k-1} \mu_{k-3}^{(3)} -
         \vartheta_{k-1}^{(2)} \mu_{k-2}^{(3)}) / \gamma_{k-1}^{(5)}$}
    \tcc*[r]{Update $\mu_{k-1}$}

    \lIf{$\gamma_k^{(4)} \neq 0$}
        {$\mu_k = (\tau_k - \eta_k \mu_{k-2}^{(3)}
         - \vartheta_k \mu_{k-1}^{(2)}) / \gamma_k^{(4)}$}
    \lElse{$\mu_k = 0$}
    \tcc*[r]{Compute $\mu_k$}

    $x_{k-2}^{(2)}  = x_{k-3}^{(2)}  + \mu_{k-2}^{(3)} w_{k-2}^{(3)}$
    \tcc*[r]{Update $x_{k-2}$}

    $x_k = x_{k-2}^{(2)}  + \mu_{k-1}^{(2)} w_{k-1}^{(3)}
                       + \mu_k w_k^{(2)}$
    \tcc*[r]{Compute $x_{k}$}

    $\chi_{k-2}^{(2)} = \norm{[\smash{\chi_{k-3}^{(2)}  \
                               \; \mu_{k-2}^{(3)}}]}$
    \tcc*[r]{Update $\norm{x_{k-2}}$}

    $\chi_k = \norm{[\smash{\chi_{k-2}^{(2)}  \ \; \mu_{k-1}^{(2)}
                                       \ \; \mu_k}]}$
    \tcc*[r]{Compute $\norm{x_k}$}
  }

  \BlankLine

  $x = x_k$,
  \quad $\phi = \phi_k$,
  \quad $\psi = \phi_k \norm{[\smash{\gamma_{k+1}
                                      \ \; \delta_{k+2}}]}$,
  \quad $\chi = \chi_k$,
  \quad $\mathcal{A} = \mathcal{A}_k$,
  \quad $\omega = \omega_k$

  \Outputs{$x, \phi ,\psi, \chi, \mathcal{A}, \kappa, \omega$}
  \caption{Pseudocode of preconditioned MINRES-QLP for
    solving \newline $(A-\sigma I) x \approx b$.  In the
    right-justified comments, $\tilde{A} \equiv M^{-\myhalf} (A-\sigma
    I) M^{-\myhalf}$.}
  \label{algo-pminresqlp}
\end{algo}

\begin{algo}{tb}

  \Inputs{$a,b$}

  \smallskip

  \lIf{$b=0$}
      {$s=0$, \qquad $r=\left\vert a\right\vert$\;
      \qquad
      \lIf{$a=0$}
          {$c=1$}
      \lElse{$c=\operatorname{sign}(a)$}\;
      }
  \ElseIf{$a=0$}
         {$c=0$, \qquad $s=\operatorname{sign}(b)$, \qquad
         $r=\left\vert b\right\vert$}
  \ElseIf{$\left\vert b \right\vert \ge \left\vert a \right\vert$}
         {$\tau=a/b$, \qquad $s=\operatorname{sign}(b)/\sqrt{1+\tau^{2}}$,
         \qquad $c=s\tau$, \qquad $r=b/s$}
  \ElseIf{$\left\vert a \right\vert > \left\vert b \right\vert$}
         {$\tau=b/a$, \qquad $c=\operatorname{sign}(a)/\sqrt{1+\tau^{2}}$,
         \qquad $s=c\tau$, \qquad $r=a/c$}

  \Outputs{$c,s,r$}
  \caption{Algorithm SymOrtho.}
  \label{algo-symortho}
\end{algo}


\section{KEY FORTRAN~90 DESIGN FEATURES}

\begin{figure}[ht]   
%
\centering
\includegraphics[scale=.6]{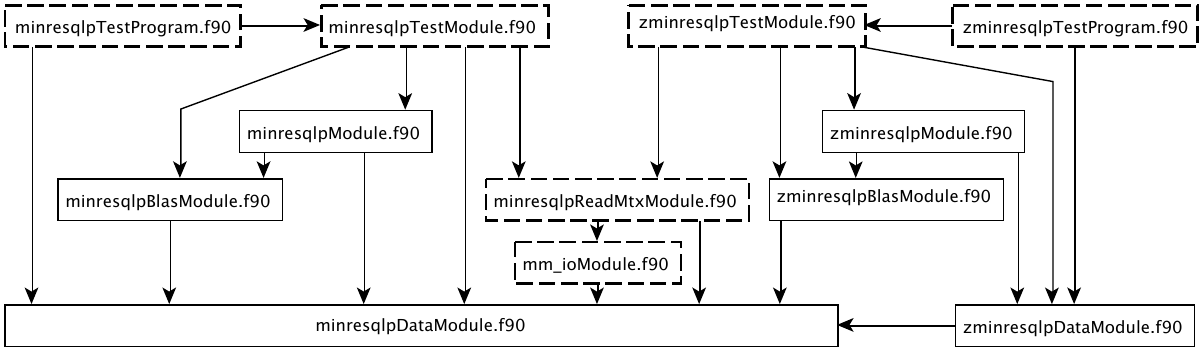}
\caption{FORTRAN~90 source files and their dependencies. Filenames
  boxed in broken lines are optional, and the corresponding files are
  used mainly for testing and demonstration.}
\label{fig:f90-program-units}
\end{figure}

Our \FORTRAN~90 package contains the following files for symmetric
problems with the first three files forming the core. Their
dependencies are depicted in Figure~\ref{fig:f90-program-units}.
\begin{enumerate}[\upshape 1.]
\item \texttt{minresqlpDataModule.f90}: defines precision and
  constants used in other modules
\item \texttt{minresqlpBlasModule.f90}: packages some BLAS
  functions~\citeA{netlibblas-f90}
\item \texttt{minresqlpModule.f90}: implements \MINRESQLP with
  preconditioning
\item \texttt{mm\_ioModule.f90} and
  \texttt{minresqlpReadMtxModule.f90}: packages subroutines for
  reading Matrix Market files~\citeA{MM,netlibblas-f90}
\item \texttt{minresqlpTestModule.f90}: illustrates how \MINRESQLP can
  call \texttt{Aprod} or \texttt{Msolve} with a short fixed parameter
  list, even if it needs arbitrary other data
\item \texttt{minresqlpTestProgram.f90}: contains the main driver
  program for unit tests
\item \texttt{Makefile}: compiles the \FORTRAN source files via the
  Unix command \texttt{make}
\item \texttt{minresqlp\_f90.README}: contains information about
  software license, other files in the package, and program
  compilation and execution.
\end{enumerate}
The counterparts of these programs for Hermitian problems have the
same filenames prefixed with the letter ``\texttt{z}''.

In our \FORTRAN~90 implementation, we use \emph{\texttt{modules}}
instead of the obsolete \FORTRAN~77 \texttt{COMMON} blocks for
grouping programs units and data together and controlling their
availability to other program units. A module can use \texttt{public}
data and subroutines from other modules (by declaring an
\emph{\texttt{interface}} block), share its own \texttt{public} data
and subroutines with other program units, and hide its own
\texttt{private} data and subroutines from being used by other program
units.

In \texttt{minresqlpModule.f90} we define a \texttt{public} subroutine
\texttt{MINRESQLP} that implements Algorithm~\ref{algo-pminresqlp}.
Two input arguments of this subroutine, \texttt{Aprod} and
\texttt{Msolve}, are external user-defined subroutines---we recommend
they be \texttt{private} for data integrity.  The subroutine
\texttt{Aprod} defines the matrix $A$ as an operator. For a given
vector $x$, the \FORTRAN statement \texttt{call Aprod(n, x, y)} must
return the product $y=Ax$ without altering $x$. The subroutine
\texttt{Msolve} is optional, and it defines a symmetric
positive-definite matrix as an operator $M$ that serves as a
preconditioner.  For a given vector $y$, the \FORTRAN statement
\texttt{call MSolve(n, y, x)} must solve the linear system $Mx=y$
without altering $y$.  To provide the compiler the necessary
information about these \texttt{private} subroutines defined in
\texttt{minresqlpTestModule}, an \texttt{interface} block in
subroutine \texttt{MINRESQLP} is declared, which essentially
replicates the headers of \texttt{Aprod} and \texttt{Msolve} in
\texttt{minresqlpTestModule}.

A public routine \texttt{minresqlptest}, also defined in module
\texttt{minresqlpTestModule}, calls \texttt{MINRESQLP} with
\texttt{Aprod} and \texttt{Msolve} passed to \texttt{MINRESQLP} as
parameters.

We declare all data variables in \texttt{minresqlpTestModule} used for
defining \texttt{Aprod} and \texttt{Msolve} to be \texttt{private} so
that they are accessible to all the subroutines in the module but not
outside.

To summarize, we have described and provided a pattern that allows
\MINRESQLP users to solve different problems by simply editing
\texttt{minresTestModule} (and possibly the main program
\texttt{minresTestProgram}, which calls \texttt{minresqlptest}).
Users do not need to change \texttt{MINRESQLP} as long as the header
of subroutines \texttt{Aprod} and \texttt{Msolve} stay the same in
\texttt{minresTestModule}.

Our design spares users from implementing \textit{reverse
  communication}, and hence enables the development of iterative
methods without \emph{a priori} knowledge of users' problem data $A$
and $M$ (by returning control to the calling program every time
\texttt{Aprod} or \texttt{Msolve} is to be invoked). While reverse
communication is widely used in scientific computing with \FORTRAN~77,
the resulting code usually appears formidable and unrecognizable from
the original pseudocode; see \cite{DEK95} and \cite{OS} for two
examples of \CG and numerical integration coded in \FORTRAN~77 and 90,
respectively. Our \MINRESQLP implementation achieves the purpose of
reverse communication while preserving code readability and thus
maintainability. The \FORTRAN~90 module structure allows a user's $Ax$
products and $Mx=y$ solves to be implemented outside \MINRESQLP in the
same way that \MATLAB's function handles operate.

Finally, unit testing is an important software development strategy
that cannot be overemphasized, especially in the scientific computing
communities. Unit testing usually consists of multiple small and fast
but specific and illuminating test cases that check whether the code
behaves as designed.  Software development is incremental, and errors
(also known as bugs) are often found over time. Adding new
functionalities or fixing errors often breaks the code for some
earlier successful test cases. It is therefore critical to expand the
test cases and to ensure that all unit tests are executed with
expected results every time a program unit is updated.

In our development of \FORTRAN~90 \MINRESQLP, we have created a suite
of $52$ test cases including singular matrices representative of
real-world applications~\cite{FSJSU,DH11}.  The test program outputs
results to \texttt{MINRESQLP.txt}. If users need to modify subroutine
\texttt{MINRESQLP}, they can run these test cases and search for the
word ``\texttt{appear}'' in the output file to check whether all tests
are reported to be successful.  For more sophisticated unit testing
frameworks employed in large-scale scientific software development,
see~\cite{OTL08}.

Further details on interface and implementation, with additional
numerical examples and documentation, are given in~\cite{CS12}.

As a last note, careful choices of parameter values are critical in
the convergence behavior of iterative solvers. While the default
parameter values in \MINRESQLP work well in most tests, they may need
to be fine-tuned in some cases by trial and error, solving a series of
problems as in iterative regularization, or partial or full
reorthogonalization of the Lanczos vectors.


\begin{acks}
  We thank Christopher Paige for his contribution to the theory of
  \MINRESQLP~\cite{CPS11}.  We also thank Tim Hopkins and David
  Saunders for testing and running our \FORTRAN~90 package on the
  Intel ifort compiler and the NAG Fortran compiler, resulting in more
  robust code. We are grateful to Zhaojun Bai and both anonymous
  reviewers for their patience and constructive comments. The first
  author also thanks Jed Brown, Ian Foster, Todd Munson, Gail Pieper,
  and Stefan Wild for their feedback and support during the
  development of this work.  We express our gratitude to the SIAM 2012
  SIAG/LA Prize Committee for their favorable consideration of
  \MINRESQLP~\cite{CPS11}.
\end{acks}

\bibliographystyle{acmtrans}
\bibliography{refs3} 

\begin{received}
Received 03 Aug 2011;
accepted 15 Jan 2012;
revised \today.
\end{received}

\end{document}